\newtheorem{theorem}{Theorem}
\title{
A Cryptographic Moving-Knife Cake-Cutting Protocol     
}
\author{
Yoshifumi Manabe
 \institute{
   NTT Communication Science Laboratories, 2-4 Hikaridai, Seika-cho,
  Kyoto 619-0237 Japan}
   \email{manabe.yoshifumi@lab.ntt.co.jp}
\and
  Tatsuaki Okamoto
 \institute{
 NTT Information Sharing Platform Laboratories,
    3-9-11, Midori-cho, Musashino-shi, 180-8585 Japan}
     \email{okamoto.tatsuaki@lab.ntt.co.jp}
}
\begin{document}

\maketitle

\begin{abstract}
This paper proposes a cake-cutting protocol using
cryptography
when the cake is a heterogeneous good that is represented by an interval
 on a 
real line.
Although the Dubins-Spanier moving-knife protocol
 with one knife achieves simple fairness,
 all players must execute the protocol synchronously.
 Thus,  the protocol cannot be executed on asynchronous networks such as the
 Internet.
 We show that the moving-knife protocol
can be executed asynchronously by a discrete protocol
 using a secure auction protocol.
 The number of cuts is $n-1$ where $n$ is the number of players,
 which is the minimum.
 
\end{abstract}

\section{Introduction}

Cake-cutting is an old problem in game theory~\cite{book2,book}.
It can be employed for such purposes as dividing territory of
a conquered island or 
assigning jobs to members of a group.

This paper discusses achieving a moving-knife protocol
using cryptography
in cake-cutting when the cake is a
heterogeneous good that is represented by an interval, $[0,1]$,
on a real line.

The moving-knife protocol is a common technique for
achieving  fair cake-cutting.
The trusted third party (TTP) or one of the players
moves a knife on
the cake. Every player watches the movement
and calls `stop' when the knife comes to some
specific point that is desirable for the  player.
Cake is cut at the points the calls are made.
Many protocols that use one or more knives  
were shown to achieve some desirable property such as exact
division \cite{book2}.

The simplest moving-knife protocol using one knife was
proposed by Dubins and Spanier~\cite{moving}. 
The protocol achieves simple-fairness and it is truthful.

Moving-knife protocols have several
disadvantages.
First, all players must watch the knife movement
simultaneously, thus moving-knife protocols
cannot be executed on networks such as the Internet,
in which transmission
delays cannot be avoided.
In addition, moving knives means cutting the cake
at an infinite number of places, thus it is considered to be
inefficient.

Many discrete protocols have been
 proposed
that achieve simple
 fairness~\cite{simpleo0,simpleo1,simpleo2,simpleo3,simpleo4}.
Several different models were proposed that concern the allowed types of primitives.
The simplest model is just minimizing the number of cuts.
Then,  Robertson-Webb model was proposed~\cite{book}. In the model, 
`cut' and `eval' operations are allowed.
The complexity of
 the protocol is given by the total number of the two operations.

However, the cake-cutting problem when applied to 
the simplest model has not yet been completely solved.
Discrete versions of the Dubins-Spanier moving-knife protocol 
considered in \cite{comsoc,simpleo3} are not truthful.

Cryptography is not commonly used in cake-cutting protocols.
A commitment protocol~\cite{commit} is used in meta-envy-free cake-cutting
protocols~\cite{meta} for multiple parties to declare simultaneously
their respective private values.
Complicated cryptographic protocols have not
been used for cake-cutting protocols so far.

\subsection{Our result}

We show a cryptographic cake-cutting
 protocol that achieves simple fairness with the
 minimum number of cuts.
We use a secure auction protocol that calculates the maximum bid
and the winning player
while hiding the bid of each player.
The protocol output is the same as that of
Dubins-Spanier moving-knife protocol. 
The protocol achieves simple fairness and it is truthful.

\section{Preliminaries}

Throughout the paper, the cake is a heterogeneous good
that is represented by  interval $[0,1]$ on a
real line.
Each player $P_i$ has a utility function, $\mu_i$,
that has the following three properties.
\begin{enumerate}
\item For any interval $X \subseteq [0,1]$ whose size is not empty,
$\mu_i(X)>0$.
      \item For any $X_1$ and $X_2$ such that
$X_1 \cap X_2 = \emptyset$, $\mu_i(X_1 \cup X_2)=\mu_i(X_1) +
\mu_i(X_2)$.
\item  $\mu_i([0,1])=1$.
\end{enumerate}
The tuple of the utility function of $P_i(i=1,\ldots, n)$
is denoted as $(\mu_1, \ldots, \mu_n)$.
Utility functions might differ among players.  
No player has knowledge of the utility of the other players.

An $n$-player cake-cutting protocol, $f$, assigns several
portions of $[0,1]$ to the players such that every portion of $[0,1]$
is assigned to one player.
We denote $f_i(\mu_1, \ldots, \mu_n)$ as the
set of portions assigned to player $P_i$ by $f$,
when the tuple of the utility function is 
$(\mu_1, \ldots, \mu_n)$.

All players are risk-averse, namely they avoid gambling.
They try to maximize the worst case utility they can obtain. 

A desirable property for cake-cutting protocols is
truthfulness.
A protocol is truthful if there is no incentive for any player
to lie about his utility function.
If a player obtains more utility by 
declaring a false value, the protocol is not robust.
For example, consider the simplest cake-cutting protocol
`divide-and-choose.'
In this protocol, Divider first cuts the cake into
two pieces $[0,x]$ and $[x,1]$, such that $\mu([0,x])=\mu([x,1])=1/2$ for
Divider.
Chooser selects the piece she prefers.
Divider obtains the remaining piece.
Since the utility function of Divider is unknown to Chooser,
Divider can lie about his utility function and
cut the cake as  $[0,x']$ and $[x',1]$, for any $x' (\not= x)$.
In this case, Chooser might select the piece such that
 the utility for Divider is more than half
 and Divider might obtain less than half.
 Thus, the risk-averse Divider obeys the rule of the protocol
 and cuts the cake in half.
 `Divide-and-choose' is thus truthful for risk-averse
 players. 

Several desirable properties of cake-cutting protocols 
have been defined~\cite{book}.
Simple fairness, which is the most fundamental one,
is defined as follows.

For any $i$, $\mu_i(f_i(\mu_1, \ldots, \mu_n)) \ge 1/n$.

This paper discusses simple-fair cake-cutting protocols.
One of the other types of the desirable property is the
social surplus, that is, the total utilities
the players obtain.
For two protocol $f$ and $f'$ which has the same
properties (for example, both truthful and simple fair),
$f$ is better than $f'$ in the sense of social surplus if
$\sum_{i=1}^n \mu_i(f_i(\mu_1, \ldots, \mu_n)) > \sum_{i=1}^n
\mu_i(f'_i(\mu_1, \ldots, \mu_n))$.

Several kinds of complexity models
of discrete cake-cutting problems are defined.
The simplest model is that the complexity is the total number of cuts.
This model is further divided into two categories.
\begin{itemize}
\item Cut-and-calculate model: Any operation that uses  the utility
      function of each player is possible other than cutting.
\item Cut-only model: No operation other than cutting is allowed.
      Thus, the utility of player $P_i$ can be known only by $P_i$
      performing a cut. 
\end{itemize}
 
Another model called the Robertson-Webb model is introduced.
The operations are
restricted to the following two types in the model.

\begin{itemize}
\item $Cut_i(I,\alpha)$: Player $P_i$ cuts interval  $I=[x_1,x_2]$
such that $\mu_i([x_1, y])= \alpha \mu_i(I)$, where $0 \le  \alpha \le
1$.

\item $Eval_i(I)$: Player $P_i$ evaluates interval $I=[x_1,x_2]$, which is
one of the cuts previously performed using the protocol.
$P_i$ returns $\mu_i(I)$.
\end{itemize}

The complexity of the Robertson-Webb model is defined as follows.
\begin{itemize}
\item Robertson-Webb cut-complexity model:
      The complexity is measured by the number of cuts.
That is, evaluation queries can be issued for free. 
\item Robertson-Webb cut-and-query-complexity model:
      The complexity is measured by the total
      number of cuts and queries.
\end{itemize}

For the cut-only model, when the number of players is $n=3$,
the minimum number of
cuts for simple-fair division is three~\cite{book}.
When $n=4$, the minimum number of cuts is four~\cite{simpleo0}.
For a general number of players,
the Divide and Conquer protocol~\cite{simpleo0} achieves $1+ nk -2^k$ cuts, where $k=\lfloor
 \log_2 n \rfloor$~\cite{minimal}.
The lower bound of the cut-only model is $\Omega(n \log n)$~\cite{hardness}. 

For the  Robertson-Webb cut-and-query-complexity model, the lower bound
is $\Omega(n \log n)$~\cite{sgall}.
Edmonds and Pruhs extended the $\Omega(n \log n)$ lower bound to
the cases when a player obtains
a union of intervals and approximate fairness is achieved~\cite{edmonds}.

\section{Dubins-Spanier moving-knife protocol}\label{previous}

This section outlines the
Dubins-Spanier moving-knife protocol~\cite{moving} shown in Fig. \ref{DS}.

\begin{figure}[h]
\begin{algorithmic}[1]
\Begin 
\State Let $k \gets n$ and $x \gets 1$.
\Repeat 
\State The TTP moves the knife from $x$ toward
 $0$.
 Let $y$ be the current position of the knife.
\State Player $P_i$ calls `stop' if $\mu_i([y,x])=\mu_i([0,x])/k$.
\State The TTP immediately stops
moving the knife when 'stop' is called.
 Let $x'$ be the point of the knife when `stop' is called.
\State The TTP cuts the cake at $x'$.
The player who said `stop' obtains the piece $[x',x]$
 and exits the protocol.
\State Let $k \gets k-1$ and $x \gets x'$.
\Until{ k=1 }
\State The remaining player obtains the rest of the cake ($[0,x]$).
\End.
\end{algorithmic}
\caption{Dubins-Spanier moving-knife protocol}\label{DS} 
\end{figure}

When the number of remaining players is $k$ and the remaining cake is $[0,x]$,
each remaining player $P_i$ calls `stop' if the knife comes to point $y$
which satisfies $\mu_i([y,x])=\mu_i([0,x])/k$,
that is, the value of piece $[y,x]$ is $1/k$ of the remaining cake.
The first player who calls `stop' obtains piece $[y,x]$
and exits the protocol. The remaining players continue the same
procedure for the remaining cake $[0,y]$.

Each player obtains at least $1/n$ based on the utility function of
the player, thus simple-fairness is achieved.

In addition, the protocol is truthful for risk-averse players.
Consider the case when player $P_i$ tells a lie.
Assume that the number of current remaining players is $k$.
Let the remaining players be 
$P_i, P_{i+1}, \ldots, P_{i+k-1}$ 
and the remaining cake be $[0,x]$.
The actual place that $P_i$ to call `stop' is $x_i$, that is,
$\mu_i([x_i,x])=\mu_i([0,x])/k$.

If $P_i$ calls `stop' earlier than $x_i$,
 $P_i$ obtains less than $\mu_i([0,x])/k$
and the result is worse than telling the truth.

If $P_i$ does not call `stop' even if
the knife comes to $x_i$, 
player $P_{i+1}$ might call `stop' at $x_i-\epsilon$.
The remaining piece is $[0,x_i-\epsilon]$ and
$\mu_i([0,x_i-\epsilon]) < (k-1) \mu_i([0,x])/k$.
Let $x_{i+1}=x_i-\epsilon$.
After that, player $P_j (j=i+2, i+3, \ldots, i+k-1)$
calls `stop' at point $x_j$ such that
$\mu_i([x_j,x_{j-1}])=\mu_i([0,x])/k$.
If $P_i$ calls `stop' before $x_j ( j >i+1)$, $P_i$
obtains less than $\mu_i([0,x])/k$.
If $P_i$ does not call `stop' and
obtains the last remaining piece $[0, x_{i+k-1}]$,
the utility of $P_i$, $\mu_i([0,x_{i+k-1}])$,
is less than $ \mu_i([0,x])/k$.
Therefore, not calling 'stop' at the true point
can be worse than telling the truth.

Note that the moving-knife protocol is not
a discrete protocol.
A protocol is presented by Endriss~\cite{comsoc}
shown in Fig. \ref{max} that makes the protocol discrete.

\begin{figure}[h]
\begin{algorithmic}[1]
\Begin 
\State Let $k \gets n$ and $x \gets 1$.
\Repeat 
\State Each player $P_i$ declares point $x_i$ such that 
$\mu_i([x_i,x])= \mu_i([0,x])/k$.
\State Let $x'$ be the maximum of $x_i$s. Let $P_i$ be the
 player who called  $x'$.
\State $P_i$ obtains piece $[x',x]$ and exits the protocol.
\State Let $k \gets k-1$ and $x \gets x'$.
\Until{ k=1 }
\State The remaining player obtains the rest of the cake ($[0,x]$).
\End.
\end{algorithmic}
\caption{Endriss protocol}\label{max} 
\end{figure}

It seems that this protocol is the same as the 
Dubins-Spanier moving-knife protocol, but it is actually not.
In this protocol, all players know the cut
point of the other players.
The cut point information can offer a hint to a player
and the player can obtain more utility
by behaving dishonestly.
Suppose that $k=3$ and the 
density functions for the utility of the players are as follows.
\begin{equation*}
u_1(z)=
\begin{cases}
 4/5  & 0 \le z \le 5/6 \\
 2    & 5/6 < z \le 1 
\end{cases}
\end{equation*}
$$u_2(z)=1( 0 \le z \le 1),$$
\begin{equation*}
u_3(z)=
\begin{cases}
 2   & 0 \le z \le 1/3 \\
 1/2   & 1/3 < z \le 1 
\end{cases}
\end{equation*}
The utility of $P_i$ for $[x,y]$, $\mu_i([x,y])$,
is calculated by $\int^y_x u_i(z) dz$.
Since $\int^1_0 u_i(z) dz=1 (i=1,2,3)$, these density
functions satisfy the conditions of the utility functions.

At the first round, each player declares
$c_1=5/6$, $c_2=2/3$, and $c_3=1/3$,
since $\int^1_{5/6} u_1(z) dz=1/3$, 
$\int^1_{2/3} u_2(z) dz=1/3$, and
$\int^1_{1/3} u_3(z) dz=1/3$. 
Since $5/6>2/3>1/3$, $P_1$ obtains $[5/6,1]$ and exits the protocol.
The next round is performed  by $P_2$ and $P_3$
with the remaining cake $[0,5/6]$.
The honest declaration, $c'_2$, at the next round
by $P_2$ is $5/12$, since
$\int^{5/6}_{5/12} u_2(z) dz=1/2 \int^{5/6}_0
u_2(z) dz =5/12$.
Since $\int^{5/6}_{11/48} u_3(z) dz=1/2 \int^{5/6}_0
u_3(z) dz$, $P_3$ will declare $11/48$ as the
cut point $c_3'$, for the next round.

Although $P_2$ cannot know $c_3'$ in advance,
it knows that $c_3'< c_3$ is satisfied for any utility function.
Thus, $P_2$ can declare a false value $1/3(=c_3)$, instead of the
true value of $5/12$ as $c_2'$, if $P_2$ knows that
 the declared value by $P_3$ in
previous round is $c_3$.
When $P_2$ declares false value $1/3$,
$P_2$ wins in this round and obtains $[1/3, 5/6]$.
The utility of $P_2$ is $1/2$, which is larger than utility
$5/12$ when $P_2$ declares the true cut point, $5/12$.

Thus knowledge of  the declared values of other players
destroys the truthful characteristic of the protocol.
The trimming protocol~\cite{simpleo3}, which also achieves simple-fair
by a discrete protocol, has the same problem about truthfulness, since a
player might be able to know all other players' cut points in the
previous round. 

Sgall and Woeginger showed a protocol in which 
the number of cuts is $n-1$, shown in Fig. \ref{sgall}.
 
 \begin{figure}[h]
\begin{algorithmic}[1]
\Begin 
\State Each player, $P_i$, simultaneously
declares $n-1$ points $x_{i,j}( 1\le j \le n-1)$
such that $\mu_i([x_{i,j}, x_{i,j+1}])=1/n ( 0 \le j \le n-1)$
 (Note that $x_{i,0}=0$ and $x_{i,n}=1$).
\State Let $y \gets 0$. 
\For{k=1}{n-1}
\Begin
\State Let $z \gets \min x_{i,k}$, where the minimum is taken
among the remaining players. 
\State Let $P_j$ be the player who declares $z$.
\State $P_j$ obtains $[y,z]$ and exits the protocol.
\State Let $y \gets z$.
\End
\State The remaining player obtains the rest of the cake ($[y,1]$).
\End.
\end{algorithmic}
\caption{Sgall-Woeginger protocol}\label{sgall} 
\end{figure}

This protocol achieves simple fairness.
When $k=1$, player $P_i$ who obtains  piece $[0,z]$
satisfies $z=x_{i,1}$, thus $\mu_i([0,x_{i,1}])=1/n$.
Next consider the case $k>1$.
If player $P_i$ obtains $[y,x_{i,k}]$ in the $k$-th round,
$P_i$ could not obtain its piece in the previous round.
Thus, $y \le x_{i,k-1}$ is satisfied for for any currently 
remaining player $P_i$ at line 6 
and $\mu_i([y,x_{i,k}]) \ge \mu_i([x_{i,k-1},x_{i,k}])=1/n$.

Since all players declare their cut points simultaneously, 
no player can know the other players' cut points in advance.
Thus, telling a false value such as in the Endriss protocol
is not effective in this protocol.

The assignment result differs from the one of
original Dubins-Spanier moving-knife protocol.
In the moving-knife protocol,
when $P_i$ exits in the first round with obtaining $[x,1]$,
each of the remaining player $P_j$
obtains at least $\mu_j([0,x])/(n-1)$, which is greater than
$1/n$.
Since $P_j$ did not win in the first round,
$\mu_j([x,1])< 1/n$, thus $\mu_j([0,x]) > (n-1)/n$.
Therefore, from the second round,
the cake is more than $(n-1)/n$ for the remaining players.
The other rounds have the same characteristic.
If a player exits with a ``small''(in the other players' view)
portion of the cake, all of the remaining players
obtains more utility.

On the other hand, in the Sgall-Woeginger protocol,
when a player exits with a ``small'' portion of the cake,
the extra part of the cake is automatically assigned
to the next round's winner.
For example, $P_i$ wins in the first round and obtains
$[0,x]$ and exits,
remaining player $P_j$ thinks that the remaining
cake is $(n-1)/n+ \mu_j([x,x_{j,1}])$, where $\mu_j([0,x_{j,1}])=1/n$.
In the next round, the player $P_k$ wins whose $x_{k,2}$ is smallest
among the remaining players, but the value of the extra part
$\mu_k([x,x_{k,1}])$ might not large among the remaining players.

In Dubins-Spanier moving-knife protocol, next round call
 is done for all of the remaining cake, thus the extra part
(such as  $[x,x_{j,1}]$) is also considered by the remaining players.
Next round winner is the player  who values the highest to
the extra part of the remaining cake.
The next round winner is satisfied with a relatively `small' portion of
the cake because of the extra part, thus
the next round remaining cake can be larger
than in the Sgall-Woeginger protocol.
Thus, in the view of the social surplus, the Dubins-Spanier
 moving knife is more desirable than Sgall-Woeginger protocol.

 \section{Cryptographic moving-knife protocol}

The important characteristics of the Dubins-Spanier moving-knife
protocol are
that (1) the declaration is done round by round and
(2) when a player $P$ calls `stop', no player
knows the other remaining players' cut points 
because the knife is moving so that the size
of the cutting piece increases.

Because of the first characteristic,
the social surplus is better than Sgall-Woeginger protocol.
Because of the second characteristic,
every player does not know the previous round
cut point information of the other remaining players.

The simplest solution to keep the protocol truthful and make the
protocol discrete would be to have a TTP.
In each round, every remaining  player privately sends its cut point
to the TTP.
The TTP decides the largest value and the player
who gave the maximum value from the cut point information.

However, it might be difficult to have
such a TTP. There might be collusion between a player and the TTP.
The TTP might send the player
cut point information to the colluding player.

In order to address this problem,
we introduce a secure auction protocol.
Secure auction protocols have been proposed
in cryptography theory~\cite{auction3,auction,auction2}.
They are outlined as follows.
\begin{itemize} 
\item Player $P_i$ generates its share of public key and secret key,
      $(PK_i, SK_i)$ of a homomorphic encryption scheme. 

      $P_i$ broadcasts $PK_i$ and the public encryption key $PK$
      is calculated by any player from $(PK_1, \ldots, PK_n)$.

      $SK_i$ is the private key of $P_i$ for decryption.

Any player can execute encryption procedure $Enc$ using $PK$.
      The ciphertext obtained by executing $Enc$ on plaintext $m$
      is $Enc(PK,m)$.

If $P_1, \ldots, P_n$ jointly execute decryption procedure $Dec$
      with their private keys $SK_1, \ldots, SK_n$,
      they can decrypt $Enc(PK,m)$ and obtain $m$.
      That is, $Dec(Enc(PK,m), SK_1, \ldots, SK_n)=m$.
      Note that the decryption can be performed without revealing the value
      of $SK_i$
      to any other players.

      For any set of players whose size is less than $n$,
      they cannot decrypt $Enc(PK,m)$ by themselves.

\item $P_i$ encrypts his bid $b_i$ using the public key,
      that is, $P_i$ calculates $c_i=Enc(PK,b_i)$.

\item $P_1, \ldots, P_n$ jointly calculates $b_{max}=\max(b_1, \ldots, b_n)$
      and player $P_j$ who bids $b_{max}$
      from $c_1, \ldots, c_n$
      without directly decrypting $c_1, \ldots, c_n$ using the
      homomorphic property.
      
\item During execution of the secure auction protocol,
      each player gives a zero-knowledge proof\cite{ZKP}
      that the player acts correctly.
The proof can be verified by any other player.

      The correctness of the obtained highest bid
      and the winner player is also given as a zero-knowledge proof.
      The proof can be verified by any player.
      That is, no player can deny its bid afterwards.
\end{itemize}

The details are shown in \cite{auction3,auction,auction2}.
Secure auction protocols use a homomorphic encryption, in which
addition of encrypted values can be accomplished
 without
 decrypting them.
Homomorphic encryption has the following properties.
\begin{itemize}
\item There exists polynomial time computable operation $\otimes$ and $^{-1}$ as follows.
      For any two ciphertext $c_1 =Enc(PK,m_1)$ and $c_2 =Enc(PK,m_2)$,
      $c_1 \otimes c_2 \in  Enc(PK, m_1+m_2)$.

      For any ciphertext $c =Enc(PK,m)$, $c^{-1} \in Enc(PK, -m)$. 
\item The encryption is semantically secure, that is, the advantage of
      the adversary for 
      the following game is negligible.

      The adversary obtains all $PK_i$'s and all $SK_i$'s except for
      some $SK_j$. First, the adversary
      can repeatedly obtain $Dec(SK,c)$ for any
      ciphertext $c$ that it selects. It then outputs two plaintext $m_0,m_1$.
      Challenger randomly selects bit $b \leftarrow \{0,1\}$ and
      $c=Enc(PK, m_b)$ is given to the adversary.

      Then the adversary outputs $b'$. It wins if $b=b'$

      The advantage of the adversary is $Pr[b=b']-1/2$.
 \end{itemize}
The first property is calculating sum of two ciphertexts without
decrypting them.
 Using the homomorphic characteristics, it is possible to
 compare multiple bids without decrypting them, that is,
 they can obtain $C=Enc(PK,\max(b_1,\ldots, b_n))$ from $c_1, \ldots, c_n$.
 They jointly decrypt $C$ and obtain the maximum bid without knowing
 each bid.
In some secure auction protocol~\cite{auction2},
another type of homomorphic encryption scheme is used in which 
multiplication of two ciphertexts are also possible.

The second  property means that no player can obtain information
of the plaintext from a given ciphertext if  at least one of
the secret keys is unknown. 
 
The moving-knife protocol
using a secure auction protocol
is shown in Fig. \ref{cmk}.
In auction protocols, the bids are considered to be an integer.
Thus, we convert cake $[0,1]$ to $[0,2^m]$ for some
large integer $m$ and each player must bid an integer value for the
cutting point. Note that $m$ must be large enough such that
for any player $P_i$ and any $c \in$ [0,1],
$\mu_i([\lfloor c \cdot 2^m \rfloor/2^m, c])$
is negligible, that is, bidding integer values
is not a bad  approximation.

\begin{figure}[h]
\begin{algorithmic}[1]
\Begin
\State Let $k \gets n$, $x \gets 2^m$.
\Repeat
\State $P_i$ decides $x_i$ such that
$\mu_i([x_i,x])=\mu_i([0,x])/k$.
\State $P_i$ encrypts $x_i$ and broadcasts it.
\State All players execute a secure auction protocol together
 and obtain  maximum bid $c$ and player $P$ who bids $c$.
\State $[c,x]$  is marked as the piece for $P$ and $P$ cannot bid any more.
\State Let $x \gets c$, $k \gets k-1$.
\Until{k=1}.
\State $[0,x]$ is marked as the piece for the remaining player  and
every player obtains his/her piece.
\End. 
\end{algorithmic}
\caption{Cryptographic moving-knife protocol.}\label{cmk} 
\end{figure}
This protocol achieves simple fairness.
The protocol is asynchronous, that is, 
 no two events in this protocol
need to be executed simultaneously.
The number of cuts is $n-1$, which is the minimum.

A difference between the Dubins-Spanier moving-knife protocol and
this protocol is
  that no player exits the protocol during the execution.
If a player exits, the set of players who
execute the secure auction protocol changes in each round.
Changing the set of players requires that the keys be re-generated 
for the secure auction protocol, thus the protocol would be inefficient.
Therefore, the set of players is unchanged in this protocol.
However, if a player obtains a piece, the player has no incentive to
execute the secure auction protocol honestly any more.
Thus, in the proposed protocol, the pieces are actually assigned to the players
at the end of the protocol.
During the execution of the secure auction protocol, each player presents 
a proof that the player executes the protocol correctly. 
If a player misbehaves, it is detected by verifying the
  proof and
the player does not obtain the piece marked for the player.
This assignment at the end of the protocol
must also be done without TTP.
If this protocol is executed just once, there is
no way to prevent a player from misbehaving.
If this protocol is executed multiple times
or some other protocol will be executed among the
same players, there is a record of the  proof that
a player misbehaved in this execution of the protocol,
and the player will be rejected from joining another protocol or
another execution of this protocol.
If a player wants not to be rejected, the player has an incentive to
act correctly.

\begin{theorem}
The protocol in Fig. \ref{cmk} is truthful for
 risk-averse players  and simple fair.
The number of cuts is minimum. 
\end{theorem}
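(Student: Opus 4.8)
The plan is to prove the three assertions separately, treating truthfulness as the substantive part. Minimality of the number of cuts is immediate: the main loop executes exactly once for each $k = n, n-1, \ldots, 2$, making one cut per iteration, so the protocol performs $n-1$ cuts; since partitioning an interval into $n$ pieces requires at least $n-1$ cuts, this matches the lower bound.

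For simple fairness I would argue by induction on the rounds, carrying the invariant that at the start of a round with $k$ remaining players and remaining cake $[0,x]$, every remaining player $P_i$ satisfies $\mu_i([0,x]) \ge k/n$. The base case $k=n$, $x=2^m$ holds since $\mu_i([0,2^m])=1=n/n$. For the inductive step, let $P_w$ be the winner, who bids $c = x_w = \max_j x_j$ and receives $[c,x]$; then $\mu_w([c,x]) = \mu_w([0,x])/k \ge 1/n$ by the invariant. For each non-winner $P_i$ we have $x_i \le c$, hence $\mu_i([0,c]) \ge \mu_i([0,x_i]) = \frac{k-1}{k}\mu_i([0,x]) \ge \frac{k-1}{n}$, which re-establishes the invariant for the next round with $k-1$ players. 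The last remaining player inherits $[0,x]$ with value at least $1/n$, so every player attains $1/n$.

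The heart of the argument is truthfulness, which I would obtain by a reduction to the truthfulness of the Dubins-Spanier protocol established in Section \ref{previous}. The first step is to show that the only information disclosed to the players in any round of Fig. \ref{cmk} is the winning bid $c$ and the identity of the winner $P$: every losing bid stays encrypted, and by the semantic security of the homomorphic scheme no coalition of fewer than $n$ players can extract any information about a hidden bid beyond the pair $(c,P)$. This is precisely the information a player observes in the moving-knife protocol, where one learns only the stopping point and who called `stop'. Consequently the strategic situation each player faces round by round is identical to that of the moving-knife protocol, and the risk-averse analysis reproduced in Section \ref{previous} applies unchanged: declaring a cut point larger than the honest $x_i$ risks winning a piece worth less than $1/n$, while declaring a smaller one risks being overtaken and forced into later rounds whose remaining cake can be driven below the honest guarantee. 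A risk-averse player maximizing worst-case utility therefore gains nothing by deviating. I would also note explicitly that the Endriss attack of Section \ref{previous} is neutralized here: since losing bids are never revealed, a player cannot infer another player's future honest cut point from a past declaration.

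The main obstacle is making this reduction airtight, namely justifying the claim that \emph{identical revealed information implies identical strategic incentives}. This rests on invoking semantic security to guarantee that the ciphertexts of the losing bids leak no usable side information, so that the only data a strategizing player can condition on are the per-round pairs $(c,P)$, exactly as in the moving-knife setting. A secondary point to address is that deviations in the cryptographic execution itself, such as submitting malformed ciphertexts or refusing to prove correct behaviour, are deterred not by the game-theoretic argument but by the zero-knowledge verification together with the end-of-protocol assignment and the reputational penalty described above; the truthfulness claim as stated concerns only the declared cut point, for which the reduction suffices.
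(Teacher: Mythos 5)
Your proposal is correct, but it is far more detailed than the paper's own proof, which consists of a single sentence: the properties hold ``because the assignment is exactly the same as the Dubins-Spanier moving-knife protocol.'' You decompose the claim into three separate arguments --- a direct count for the number of cuts, a round-by-round induction with the invariant $\mu_i([0,x]) \ge k/n$ for simple fairness, and an information-equivalence reduction for truthfulness --- whereas the paper relies entirely on the equivalence with Dubins-Spanier and on the discussion in the body of Sections 3 and 4. Your treatment of truthfulness is in fact more careful than the paper's on the one point that matters most: identical \emph{assignments} under honest play do not by themselves imply identical \emph{incentives}, as the Endriss protocol demonstrates (it yields the same honest-play assignment as Dubins-Spanier yet is not truthful, precisely because losing cut points are revealed). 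Your explicit appeal to semantic security to argue that each round discloses only the winning bid and the winner's identity --- so that the strategic situation coincides with that of the moving-knife protocol --- is the step that makes the paper's intended reduction actually go through; the paper states this motivation in the prose preceding the theorem but does not incorporate it into the proof. The one detail neither you nor the paper's proof addresses inside the argument is the integer discretization of bids to $[0,2^m]$, which makes the fairness guarantee hold only up to the negligible error the paper concedes earlier; your invariant as written assumes exact cut points.
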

\begin{proof}
These properties are achieved because
 the assignment is exactly the same as the Dubins-Spanier
 moving-knife protocol. 
\end{proof}

\section{Conclusion}
This paper proposed a cryptographic cake-cutting protocol.
The protocol is discrete and truthful
It achieves simple fairness with the minimum number of cuts.

Further study will include 
the use of cryptography in other cake-cutting protocols.
 
{\bf Acknowledgment} We thank Dr. Hiro Ito and anonymous referees 
for their valuable comments.

\bibliographystyle{eptcs}
\bibliography{cut}

\begin{thebibliography}{10}
\providecommand{\bibitemdeclare}[2]{}
\providecommand{\urlprefix}{Available at }
\providecommand{\url}[1]{\texttt{#1}}
\providecommand{\href}[2]{\texttt{#2}}
\providecommand{\urlalt}[2]{\href{#1}{#2}}
\providecommand{\doi}[1]{doi:\urlalt{http://dx.doi.org/#1}{#1}}
\providecommand{\bibinfo}[2]{#2}

\bibitemdeclare{incollection}{auction3}
\bibitem{auction3}
\bibinfo{author}{Masayuki Abe} \& \bibinfo{author}{Koutarou Suzuki}
  (\bibinfo{year}{2002}): \emph{\bibinfo{title}{M+1-st Price Auction Using
  Homomorphic Encryption}}.
\newblock In \bibinfo{editor}{David Naccache} \& \bibinfo{editor}{Pascal
  Paillier}, editors: {\sl \bibinfo{booktitle}{Public Key Cryptography}}, {\sl
  \bibinfo{series}{Lecture Notes in Computer Science}} \bibinfo{volume}{2274},
  \bibinfo{publisher}{Springer Berlin / Heidelberg}, pp.
  \bibinfo{pages}{395--398}, \doi{10.1007/3-540-45664-3\_8}.
\newblock \bibinfo{note}{10.1007/3-540-45664-3\_8}.

\bibitemdeclare{book}{book2}
\bibitem{book2}
\bibinfo{author}{S.J. Brams} \& \bibinfo{author}{A.D. Taylor}
  (\bibinfo{year}{1996}): \emph{\bibinfo{title}{Fair division: from
  cake-cutting to dispute resolution}}.
\newblock \bibinfo{publisher}{Cambridge University Press}.

\bibitemdeclare{article}{commit}
\bibitem{commit}
\bibinfo{author}{Gilles Brassard}, \bibinfo{author}{David Chaum} \&
  \bibinfo{author}{Claude Cr\'{e}peau} (\bibinfo{year}{1988}):
  \emph{\bibinfo{title}{Minimum disclosure proofs of knowledge}}.
\newblock {\sl \bibinfo{journal}{J. Comput. Syst. Sci.}} \bibinfo{volume}{37},
  pp. \bibinfo{pages}{156--189}, \doi{10.1016/0022-0000(88)90005-0}.

\bibitemdeclare{article}{hardness}
\bibitem{hardness}
\bibinfo{author}{Costas Busch}, \bibinfo{author}{Mukkai~S. Krishnamoorthy} \&
  \bibinfo{author}{Malik Magdon-Ismail} (\bibinfo{year}{2005}):
  \emph{\bibinfo{title}{Hardness Results for Cake Cutting}}.
\newblock {\sl \bibinfo{journal}{Bulletin of the EATCS}} \bibinfo{volume}{86},
  pp. \bibinfo{pages}{85--106}.

\bibitemdeclare{article}{moving}
\bibitem{moving}
\bibinfo{author}{L.~E. Dubins} \& \bibinfo{author}{E.~H. Spanier}
  (\bibinfo{year}{1961}): \emph{\bibinfo{title}{How to Cut A Cake Fairly}}.
\newblock {\sl \bibinfo{journal}{The American Mathematical Monthly}}
  \bibinfo{volume}{68}(\bibinfo{number}{1}), pp. \bibinfo{pages}{1--17},
  \doi{10.2307/2311357}.

\bibitemdeclare{inproceedings}{edmonds}
\bibitem{edmonds}
\bibinfo{author}{Jeff Edmonds} \& \bibinfo{author}{Kirk Pruhs}
  (\bibinfo{year}{2006}): \emph{\bibinfo{title}{Cake cutting really is not a
  piece of cake}}.
\newblock In: {\sl \bibinfo{booktitle}{Proceedings of the seventeenth annual
  ACM-SIAM symposium on Discrete algorithm}}, \bibinfo{series}{SODA '06},
  \bibinfo{publisher}{ACM}, \bibinfo{address}{New York, NY, USA}, pp.
  \bibinfo{pages}{271--278}, \doi{10.1145/1109557.1109588}.

\bibitemdeclare{misc}{comsoc}
\bibitem{comsoc}
\bibinfo{author}{U.~Endriss} (\bibinfo{year}{2007}):
  \emph{\bibinfo{title}{Cake-Cutting Procedures}}.
\newblock \urlprefix\url{http://staff.science.uva.nl/\textasciitilde
  ulle/teaching/comsoc/2007/slides/comsoc-cakes.pdf}.

\bibitemdeclare{article}{simpleo0}
\bibitem{simpleo0}
\bibinfo{author}{S.~Even} \& \bibinfo{author}{A.~Paz} (\bibinfo{year}{1984}):
  \emph{\bibinfo{title}{A note on cake cutting}}.
\newblock {\sl \bibinfo{journal}{Discrete Applied Mathematics}}
  \bibinfo{volume}{7}(\bibinfo{number}{3}), pp. \bibinfo{pages}{285 -- 296},
  \doi{10.1016/0166-218X(84)90005-2}.

\bibitemdeclare{article}{ZKP}
\bibitem{ZKP}
\bibinfo{author}{Oded Goldreich}, \bibinfo{author}{Silvio Micali} \&
  \bibinfo{author}{Avi Wigderson} (\bibinfo{year}{1991}):
  \emph{\bibinfo{title}{Proofs that yield nothing but their validity or all
  languages in NP have zero-knowledge proof systems}}.
\newblock {\sl \bibinfo{journal}{J. ACM}} \bibinfo{volume}{38}, pp.
  \bibinfo{pages}{690--728}, \doi{10.1145/116825.116852}.

\bibitemdeclare{incollection}{simpleo1}
\bibitem{simpleo1}
\bibinfo{author}{H.W. Kuhn} (\bibinfo{year}{1967}): \emph{\bibinfo{title}{On
  Games of Fair Division}}.
\newblock In: {\sl \bibinfo{booktitle}{Essays in Mathematical Economics in
  Honor of Oskar Morgenstern}}, \bibinfo{publisher}{Princeton University
  Press}.

\bibitemdeclare{inproceedings}{auction}
\bibitem{auction}
\bibinfo{author}{Kaoru Kurosawa} \& \bibinfo{author}{Wakaha Ogata}
  (\bibinfo{year}{2002}): \emph{\bibinfo{title}{Bit-Slice Auction Circuit}}.
\newblock In: {\sl \bibinfo{booktitle}{Proceedings of the 7th European
  Symposium on Research in Computer Security}}, \bibinfo{series}{ESORICS '02},
  \bibinfo{publisher}{Springer-Verlag}, \bibinfo{address}{London, UK, UK}, pp.
  \bibinfo{pages}{24--38}, \doi{10.1007/3-540-45853-0\_2}.

\bibitemdeclare{inproceedings}{meta}
\bibitem{meta}
\bibinfo{author}{Yoshifumi Manabe} \& \bibinfo{author}{Tatsuaki Okamoto}
  (\bibinfo{year}{2010}): \emph{\bibinfo{title}{Meta-envy-free cake-cutting
  protocols}}.
\newblock In: {\sl \bibinfo{booktitle}{Proceedings of the 35th international
  conference on Mathematical foundations of computer science}},
  \bibinfo{series}{MFCS'10}, \bibinfo{publisher}{Springer-Verlag},
  \bibinfo{address}{Berlin, Heidelberg}, pp. \bibinfo{pages}{501--512},
  \doi{10.1007/978-3-642-15155-2\_44}.

\bibitemdeclare{inproceedings}{auction2}
\bibitem{auction2}
\bibinfo{author}{Takuho Mitsunaga}, \bibinfo{author}{Yoshifumi Manabe} \&
  \bibinfo{author}{Tatsuaki Okamoto} (\bibinfo{year}{2010}):
  \emph{\bibinfo{title}{Efficient secure auction protocols based on the
  Boneh-Goh-Nissim encryption}}.
\newblock In: {\sl \bibinfo{booktitle}{Proceedings of the 5th international
  conference on Advances in information and computer security}},
  \bibinfo{series}{IWSEC'10}, \bibinfo{publisher}{Springer-Verlag},
  \bibinfo{address}{Berlin, Heidelberg}, pp. \bibinfo{pages}{149--163},
  \doi{10.1007/978-3-642-16825-3\_11}.

\bibitemdeclare{article}{minimal}
\bibitem{minimal}
\bibinfo{author}{J.~Robertson} \& \bibinfo{author}{W.~Webb}
  (\bibinfo{year}{1991}): \emph{\bibinfo{title}{Minimal Number of Cuts for Fair
  Division}}.
\newblock {\sl \bibinfo{journal}{Arts. Comb.}} \bibinfo{volume}{31}, pp.
  \bibinfo{pages}{191 -- 197}.

\bibitemdeclare{book}{book}
\bibitem{book}
\bibinfo{author}{J.~Robertson} \& \bibinfo{author}{W.~Webb}
  (\bibinfo{year}{1998}): \emph{\bibinfo{title}{Cake-cutting algorithms: be
  fair if you can}}.
\newblock \bibinfo{series}{Ak Peters Series}, \bibinfo{publisher}{A.K. Peters}.

\bibitemdeclare{book}{simpleo2}
\bibitem{simpleo2}
\bibinfo{author}{T.L. Saaty} (\bibinfo{year}{1970}):
  \emph{\bibinfo{title}{Optimization in integers and related extremal
  problems}}.
\newblock \bibinfo{series}{International series in pure and applied
  mathematics}, \bibinfo{publisher}{McGraw-Hill}.

\bibitemdeclare{incollection}{sgall}
\bibitem{sgall}
\bibinfo{author}{Jiri Sgall} \& \bibinfo{author}{Gerhard Woeginger}
  (\bibinfo{year}{2003}): \emph{\bibinfo{title}{A Lower Bound for Cake
  Cutting}}.
\newblock In \bibinfo{editor}{Giuseppe Di~Battista} \& \bibinfo{editor}{Uri
  Zwick}, editors: {\sl \bibinfo{booktitle}{Algorithms - ESA 2003}}, {\sl
  \bibinfo{series}{Lecture Notes in Computer Science}} \bibinfo{volume}{2832},
  \bibinfo{publisher}{Springer Berlin / Heidelberg}, pp.
  \bibinfo{pages}{459--469}, \doi{10.1007/978-3-540-39658-1\_42}.

\bibitemdeclare{article}{simpleo3}
\bibitem{simpleo3}
\bibinfo{author}{H.~Steinhaus} (\bibinfo{year}{1948}):
  \emph{\bibinfo{title}{The Problem of Fair Division}}.
\newblock {\sl \bibinfo{journal}{Econometrica}} \bibinfo{volume}{16}, pp.
  \bibinfo{pages}{101 -- 104}.

\bibitemdeclare{book}{simpleo4}
\bibitem{simpleo4}
\bibinfo{author}{H.~Steinhaus} (\bibinfo{year}{1969}):
  \emph{\bibinfo{title}{Mathematical Snapshots}}.
\newblock \bibinfo{publisher}{Oxford University Press}.

\end{thebibliography}

\end{document}